\documentclass[journal,twoside,web]{ieeecolor}

\usepackage{generic}
\usepackage{cite}
\usepackage{amsmath,amssymb,amsfonts}
\usepackage{algorithmic}
\usepackage{graphicx}
\usepackage{textcomp}

\usepackage{graphicx}
\usepackage{subfigure}
\usepackage{epsfig} % for postscript graphics files
\usepackage{lcsys}
\usepackage[normalem]{ulem}
\usepackage[hidelinks]{hyperref}
\usepackage{bm}
\usepackage[most]{tcolorbox}
\usepackage{subcaption}

\newtheorem{lemma}{Lemma}
\newtheorem{theorem}{Theorem}
\newtheorem{remark}{Remark}
\newtheorem{proposition}{Proposition}

\newcommand{\E}{\mathbb{E}}
\newcommand{\KL}{D_{\rm KL}}

\newcommand{\norm}[1]{\left\|#1\right\|}
\newcommand{\normM}[2]{\left\|#1\right\|_{#2}}
\newcommand{\T}{^\top}

\def\Lc{{\mathcal L}}

\def\Nc{{\mathcal N}}

\def\Pbb{{\mathbb P}}

\def\Rbb{{\mathbb R}}

\def\0{{\bf 0}}

\newcommand{\bitem}{\begin{itemize}}
\newcommand{\eitem}{\end{itemize}}
\newcommand{\btabular}{\begin{tabular}}
\newcommand{\etabular}{\end{tabular}}
\newcommand{\bcenter}{\begin{center}}
\newcommand{\ecenter}{\end{center}}
\newcommand{\bea}{\begin{eqnarray}}
\newcommand{\eea}{\end{eqnarray}}
\newcommand{\bean}{\begin{eqnarray*}}
\newcommand{\eean}{\end{eqnarray*}}
\newcommand{\ba}{\left[ \begin{array}}
\newcommand{\ea}{\\ \end{array} \right]}
\newcommand{\bear}{\begin{array}}
\newcommand{\eear}{\\ \end{array}}

\newcommand{\non}{\nonumber}

\newcommand{\ra}{\rightarrow}

\newcounter{subequation}
\def\beasub{\addtocounter{equation}{+1}
\setcounter{subequation}{\value{equation}}
\setcounter{equation}{0}
\renewcommand{\theequation}{\arabic{subequation}\alph{equation}}
\begin{eqnarray}}
\def\eeasub{\end{eqnarray}
\setcounter{equation}{\value{subequation}}
\renewcommand{\theequation}{\arabic{equation}}}

\tcbset{phenomenonbox/.style={colback=gray!10!white,colframe=black,boxrule=0.8pt,arc=2pt,left=4pt,right=4pt,top=4pt,bottom=4pt}}

\begin{document}

\def\BibTeX{{\rm B\kern-.05em{\sc i\kern-.025em b}\kern-.08em
    T\kern-.1667em\lower.7ex\hbox{E}\kern-.125emX}}
\markboth{\journalname, VOL. XX, NO. XX, XXXX 2017}
{Author \MakeLowercase{\textit{et al.}}: Preparation of Papers for IEEE Control Systems Letters (August 2022)}

\title{Detection--Control Games under Hidden Modes: Resilience-Induced Blindness Phenomenon}
\author{Anh Tung Nguyen$^{1}$ and 
Quanyan Zhu$^{2}$
\thanks{The work of Quanyan Zhu was supported in part by the National Science Foundation under Grant No. 2521130.
The work of Anh Tung Nguyen was supported by the Swedish Foundation for Strategic Research and the Royal Swedish Academy of Engineering Sciences.}%
\thanks{$^{1}$Anh Tung Nguyen is with the Department of Information Technology, Uppsala University, Box 524, 751 20 Uppsala, Sweden; \texttt{anh.tung.nguyen@it.uu.se}.}%
\thanks{$^{2}$Quanyan Zhu is with the Department of Electrical and Computer Engineering, New York University, NY, 11201, USA; \texttt{qz494@nyu.edu}.}%
}

\maketitle
\thispagestyle{empty}

\begin{abstract}
This paper studies resilient control for cyber-physical systems operating under hidden degraded or compromised modes. We formulate hidden-mode
detection and belief-dependent control as a game between two decision makers with different objectives: the detector seeks informative belief updates, while the controller seeks regulation performance. This objective mismatch shows why the usual separation intuition between detector design and controller design may fail, leading to a performance-reversal phenomenon induced by the resilience of the controller. For a two-mode linear Gaussian system, we theoretically characterize this phenomenon by linking the resilience margin to the log-likelihood evidence. The analysis shows that a well-performing controller with a large resilience margin can suppress mode-dependent information and slow belief adaptation, which in turn degrades the control performance. The resilience-induced blindness phenomenon and its {partial} mitigation {through increased detector sensitivity} are illustrated in numerical simulations. 
\end{abstract}

\begin{IEEEkeywords}
Game theory, fault-tolerant systems, optimal control, resilient control, hidden mode detector.
\end{IEEEkeywords}

\section{Introduction}
%Resilient 
\IEEEPARstart{R}{esilient} control architectures for cyber-physical systems are often designed according to a two-stage paradigm. A detector is first designed to identify anomalies, attacks, faults, or hidden operational modes from system measurements \cite{naha2023quickest,zhang2022online,ding2008model,mo2009secure,blom1988interacting}. To achieve a desired performance objective, the controller is then designed based on the detector outcome through switching, reconfiguration, or adaptive feedback \cite{tao2023optimal,wang2022asynchronous,de2023mixed}. {Related active fault-diagnosis studies retain distinct diagnosis and control tasks while coordinating their common input. The author in~\cite{xu2023input} first constructs a diagnosis-compatible input set and then selects a tracking-oriented input from that set. Guo et al. combine normalized diagnosis and control criteria in a unified finite-horizon input optimization~\cite{guo2025active}. These studies show that control actions can be coordinated to preserve diagnostic information. However, they do not characterize the inefficiency that may arise when a hidden-mode detector and a regulation-oriented controller retain separate local objectives.}

{We study this modular architecture.} The detector infers the hidden mode from measurements generated by the controlled system, while the controller uses the detector belief for regulation. Thus, control shapes the evidence available for detection, and belief adaptation shapes the control action. Treating the modules independently can consequently neglect this reciprocal information--regulation coupling. We represent this reciprocal dependence as a noncooperative game \cite{bacsar1998dynamic}. Within the stated policy classes, the detector and controller implement best responses to their respective local objectives, yet their mutual best responses need not be efficient for the overall resilient-control system.  {This inefficiency is driven by an information--regulation tradeoff. Strong regulation can suppress mode-dependent state and measurement differences and slow belief adaptation, whereas greater mode-dependent excitation can improve discrimination at the expense of regulation performance or control effort. The game framework therefore identifies and assesses this system-level inefficiency rather than creating the detector--controller architecture.}

This paper studies the interaction between a hidden-mode detector and a controller and makes three contributions. First, we formulate the interaction between the hidden-mode detector and the belief-dependent controller as a game, where the two modules find their optimal policies with respect to their local objectives. {The Nash-equilibrium viewpoint is used to explain how locally consistent module policies can be jointly inefficient. The inefficiency leads to a resilience-induced blindness phenomenon, which degrades the system performance. 
Second, since, to the best of our knowledge, this phenomenon is identified for the first time, we present it on a standard two-mode linear-Gaussian architecture with the aim of isolating the detector--controller interaction from nonlinearities and other modeling complexities.
}
For this benchmark, we derive a quantitative bound linking the belief-dependent resilience margin to log-likelihood evidence, thereby formalizing how resilient regulation can suppress detection information and degrade performance. Third, numerical simulations illustrate the phenomenon and show that increasing detector sensitivity can partially mitigate it, thereby motivating future detector--controller co-design. {The representative study exposes the mechanism clearly, while paired Monte Carlo simulations evaluate its variability across Gaussian noise realizations.}

The remainder of the paper is organized as follows. In Section~II, we introduce the game-framework interpretation of the detector--controller interaction, while Section~III characterizes the resilience-induced blindness phenomenon. Section~IV presents simulation results, and Section~V concludes the paper.

\section{{Game-Framework Interpretation of Detector--Controller Interaction}}
In this section, we aim to study the interaction between a hidden-mode detector and a controller within a standard linear Gaussian setting. The linearity will enable us to confirm a phenomenon, presented for the first time in the next section, as a direct result from the interaction. Furthermore, the Gaussian setting lays a solid foundation for future extensions motivated by the central limit theorem.

\subsection{{Detector-Controller Interaction as a Game}}
{
To study the interaction, we first describe a two-mode linear-Gaussian system as follows:
\begin{align}
x_{k+1}&=A_{\theta_k}x_k+B_{\theta_k}u_k+w_k, \label{eq:true-dynamics} \\
 y_k&=C_{\theta_k}x_k+v_k,
\label{eq:true-output}
\end{align}
}{where $\theta_k\in\Theta=\{0,1\}$ is a time-varying hidden mode. Mode $\theta_k=0$ is nominal, whereas $\theta_k=1$ is compromised or degraded. The state, input, and output are $x_k\in\Rbb^n$, $u_k\in\Rbb^m$, and $y_k\in\Rbb^p$, respectively. The disturbances are Gaussian with zero mean, i.e.,  $w_k\sim\Nc(0,\Sigma_w)$ and $v_k\sim\Nc(0,\Sigma_v)$. The matrices $A_i$, $B_i$, and $C_i$ are given for each mode index $i\in\Theta$. We assume that $\{\theta_k\}$ is piecewise constant with sparse switching. The later presented theoretical results are interval-wise: they apply on constant-mode intervals, while sparse switching determines whether enough evidence accumulates before the next switch.}

Let $\mathcal I_k^D \triangleq \sigma(y_{0:k},u_{0:k-1})$ denote the detector information and $\mathcal I_k^C \triangleq \sigma(y_{0:k},u_{0:k-1},\pi_k)$ denote the controller information. {By using $\mathcal I_k^D$ and $\mathcal I_k^C$,} the detector constructs a posterior mode belief $\pi_k\in\Delta(\Theta)$, where $\Delta(\Theta)$ is the probability simplex over $\Theta$, while the controller chooses $u_k$ using the measurement history and the current belief provided by the detector. {More specifically,} a detector policy is a sequence $\delta \triangleq \{\delta_k\}$ with $\pi_k \triangleq \delta_k(\mathcal I_k^D)$, and a controller policy is a sequence $\mu \triangleq \{\mu_k\}$ with $u_k \triangleq \mu_k(\mathcal I_k^C)$.

For a fixed controller, the detector minimizes
\begin{align}
J_D(\delta;\mu)=\E^{\delta,\mu} \textstyle \sum_{k=1}^{\infty} \big( \ell_D(\pi_k,\pi_{k-1}) + \log Z_k \big),
\label{def_det_prob}
\end{align}
where $Z_k\triangleq\sum_{i\in\Theta}\pi_{k-1}^{(i)}{\Lc_k(i)}^\beta$,
\begin{align}
\ell_D & \triangleq \KL(\pi_k\|\pi_{k-1}) - \beta \textstyle \sum_{i\in\Theta}\pi_k^{(i)}\log {\Lc_k(i)}, 
\label{eq:det_cost}
\end{align}
and the likelihood ${\Lc_k(i)}= \Pbb (y_k\mid\theta_k=i,\mathcal I_{k-1}^D)$. Here, $\KL(\pi_k\|\pi_{k-1})$ stands for the Kullback-Leibler divergence of $\pi_k$ from $\pi_{k-1}$. In \eqref{eq:det_cost}, the first term penalizes abrupt belief changes, while the second term penalizes assigning probability to modes that poorly explain the current measurement. Therefore, the detector problem \eqref{def_det_prob} is equivalent to
\begin{align}
\min_{\pi_k\in\Delta(\Theta)}\left\{\KL(\pi_k\|\pi_{k-1})-\beta \textstyle \sum_{i\in\Theta}\pi_k^{(i)}\log {\Lc_k(i)}\right\},
\label{eq:stagewise_detector}
\end{align}
which is convex on the simplex.  Using the standard Lagrangian method, \eqref{eq:stagewise_detector} gives us the following optimal solution:
\begin{align}
\pi_k^{(i)}=
\frac{\pi_{k-1}^{(i)}{\Lc_k(i)}^\beta}
{\sum_{i'\in\Theta}\pi_{k-1}^{(i')}{\Lc_k(i')}^\beta},
\qquad i\in\Theta.
\label{eq:temp_bayes_general}
\end{align}
Here, $\beta>0$ controls how strongly the detector reacts to the evidence given by the recent measurement. Interestingly, $\beta=1$ recovers the standard Bayesian update, while larger $\beta$ approaches a maximum-likelihood selector \cite[Ch. 2]{sarkka2023bayesian}. This general form \eqref{eq:temp_bayes_general} provides us with a more flexible belief update 
since altering $\beta$ changes the detector objective itself, not only the numerical value of the posterior.

\begin{figure}[!t]
    \centering
    \includegraphics[width=0.9\linewidth]{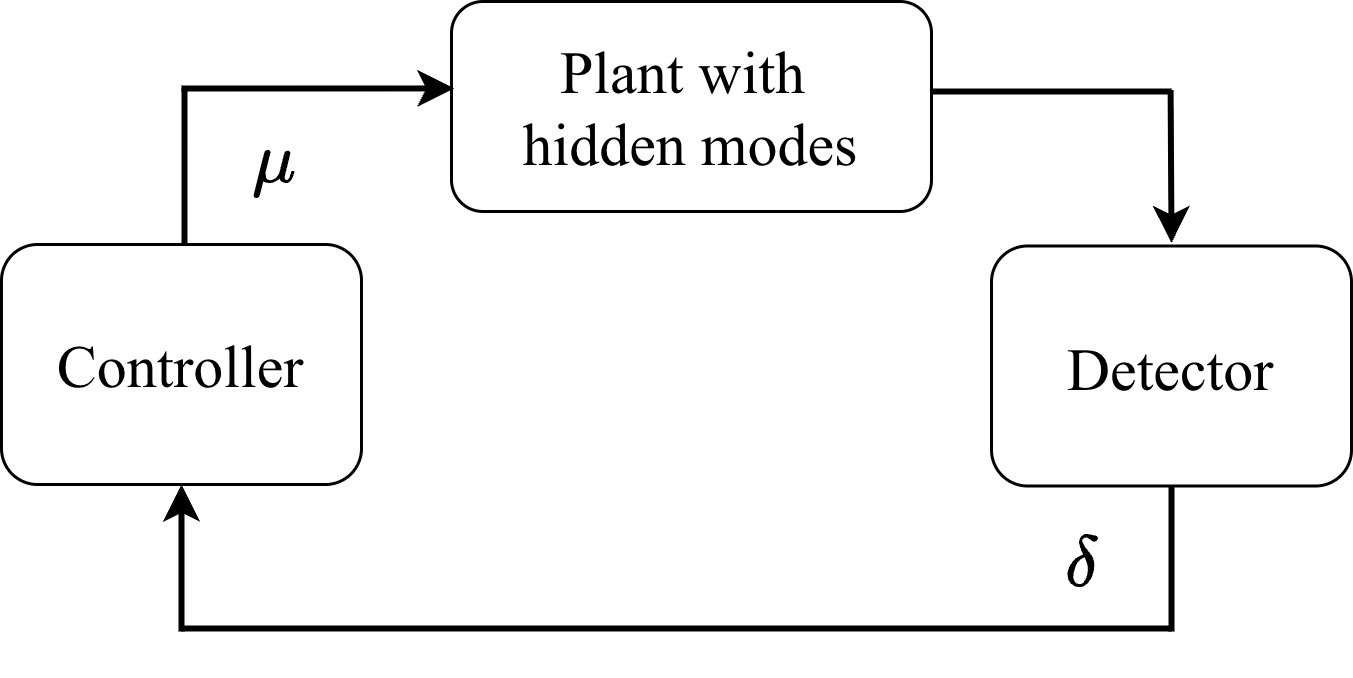}
    \caption{{An illustration of the game-theoretic interaction between the detector and the controller.} }
    \label{fig:DC_Interaction}
\end{figure}

For a fixed detector, the controller minimizes the cost:
\begin{align}
J_C(\mu;\delta)=\E^{\delta,\mu} \textstyle \sum_{k=0}^{\infty}\ell_C(x_k,u_k), \label{def_JC_general}
\end{align}
with a typical quadratic cost $\ell_C(x_k,u_k)\triangleq x_k\T Qx_k+u_k\T Ru_k~(Q \succ 0, R \succ 0)$. In general, this is a stochastic output-feedback problem since the controller only observes the state through the measurement history and the detector belief. Hence, the relevant controller information can be summarized by a state estimate, denoted as $\hat x$, and a belief $\pi_k$, i.e., $\xi_k=(\hat x_k,\pi_k)$, which motivates the belief-dependent formulation in the next section.

A pair $(\delta^\star,\mu^\star)$ is a Nash equilibrium if neither module can improve its own objective by unilateral deviation:
\begin{align}
J_D(\delta^\star;\mu^\star)&\le J_D(\delta;\mu^\star),\quad\forall\delta,\nonumber\\
J_C(\mu^\star;\delta^\star)&\le J_C(\mu;\delta^\star),\quad\forall\mu.
\label{def_br}
\end{align}
{This Nash equilibrium is locally consistent, i.e., neither the detector nor the controller can reduce its own objective by a unilateral deviation. Nevertheless, the equilibrium need not be efficient for the overall resilient-control objective. The controller objective does not internalize how regulation changes future mode information, while the detector objective does not internalize how belief adaptation changes future regulation cost. These omitted cross-effects allow individually optimal responses to yield an inefficient outcome characterized in Section III.}

{The interaction between the detector and the controller is illustrated in Fig.~\ref{fig:DC_Interaction}. 
In the remainder of the paper, we aim to study whether the best responses of these two components lead to an inefficient outcome. To facilitate the investigation, we describe the hidden-mode detector and the controller in more detail.}

\subsection{{Hidden-mode Detector}}
{For each mode index $i\in\Theta$, we assume that $(A_i,C_i)$ is observable.} Hence, a mode-conditioned state observer is given as follows:
\begin{align}
\hat x_{k|k-1}^{(i)}&=A_i\hat x_{k-1}^{(i)}+B_i u_{k-1},~
\hat y_{k|k-1}^{(i)}=C_i\hat x_{k|k-1}^{(i)},\nonumber\\
\hat x_k^{(i)}&=\hat x_{k|k-1}^{(i)}+L_i(y_k-\hat y_{k|k-1}^{(i)}),
\label{eq:mode_observer}
\end{align}
where {$(I-L_iC_i)A_i$ is Schur stable for every $i\in\Theta$.} The innovation is defined as follows:
\begin{align}
	r_k^{(i)} \triangleq y_k-\hat y_{k|k-1}^{(i)}.
    \label{eq:residual}
\end{align}
Under hypothesis $\theta_k=i$, the conditional Gaussian model gives 
\begin{align}
    y_k\mid(\theta_k=i,\mathcal I_{k-1}^D)\sim\mathcal N(\hat y_{k|k-1}^{(i)},\Sigma_i),
\end{align}
which leads to the likelihood
\begin{align}
{\Lc_k(i)}=\frac{1}{\sqrt{(2\pi)^p\det(\Sigma_i)}}
\exp\!\bigg(-\frac12 r_k^{(i)\top}\Sigma_i^{-1}r_k^{(i)}\bigg).
\end{align}

With abuse of notation, let $\pi_k$ denote the belief of the nominal mode ($\theta_k=0$) afterward. Unless otherwise stated, the detector uses the tempered Bayesian belief update \eqref{eq:temp_bayes_general} with $\beta=1$, as the standard Bayesian update in \cite[Ch. 2]{sarkka2023bayesian}: 
\begin{align}
\pi_k=\frac{\pi_{k-1} {\Lc_k(0)}}{\pi_{k-1}{\Lc_k(0)}+(1-\pi_{k-1}){\Lc_k(1)}}.
\label{eq:binary_bayes}
\end{align}
The detector-sensitivity parameter $\beta$ is revisited in Section~IV as a phenomenon mitigation policy, while the other parts stick to $\beta = 1$ to study the interaction between the detector and the controller.

{For the piecewise-constant time-varying mode, the following proposition characterizes belief concentration within a constant-mode interval. Under sparse switching, it applies locally when the dwell time is sufficiently long for evidence in favor of the active mode to accumulate.} 
\begin{proposition}
\label{prop:asym_belief}
Consider the binary-mode case $\Theta=\{0,1\}$.
Suppose that $\pi_0\in(0,1)$, ${\Lc_k(0)} > 0$, and ${\Lc_k(1)} > 0$ for all $k\geq1$. Let the detector update be given by
\eqref{eq:temp_bayes_general} with $\beta>0$. Define the accumulated log-likelihood ratio
\begin{align}
    \Lambda_k \triangleq \sum_{t=1}^{k} \log \frac{{\Lc_t(0)}}{{\Lc_t(1)}}. \label{acc_log_likelihood}
\end{align}
If $\lim_{k \ra \infty} \Lambda_k =  + \infty$,
then one has $\lim_{k \ra \infty} \pi_k = 1$.
If $\lim_{k \ra \infty} \Lambda_k =  - \infty$,
then one has $\lim_{k \ra \infty} \pi_k = 0$.
\end{proposition}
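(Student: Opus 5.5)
The plan is to pass to log-odds, where the tempered update \eqref{eq:temp_bayes_general} becomes additive. Since $\pi_0\in(0,1)$ and both likelihoods are strictly positive, an induction on \eqref{eq:temp_bayes_general} shows $\pi_k\in(0,1)$ for every $k$. The numerator and denominator are positive and the numerator is strictly smaller than the denominator, because the other summand is positive. Hence the log-odds $\eta_k\triangleq\log\big(\pi_k/(1-\pi_k)\big)$ is well defined and finite for all $k\ge 0$. Throughout, $\pi_k$ denotes the nominal-mode belief $\pi_k^{(0)}$ and $1-\pi_k=\pi_k^{(1)}$.

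Next, I take the ratio of the two components of \eqref{eq:temp_bayes_general}. The normalizer $Z_k$ cancels, giving
\begin{align*}
\frac{\pi_k}{1-\pi_k}=\frac{\pi_{k-1}}{1-\pi_{k-1}}\left(\frac{\Lc_k(0)}{\Lc_k(1)}\right)^{\beta},
\end{align*}
that is, $\eta_k=\eta_{k-1}+\beta\log\big(\Lc_k(0)/\Lc_k(1)\big)$. Telescoping over $t=1,\dots,k$ and using \eqref{acc_log_likelihood} yields the closed form
\begin{align*}
\eta_k=\eta_0+\beta\Lambda_k,\qquad \pi_k=\frac{1}{1+e^{-\eta_0-\beta\Lambda_k}}.
\end{align*}

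The conclusion then follows from continuity and monotonicity of the logistic map $s\mapsto 1/(1+e^{-s})$. Since $\beta>0$ and $\eta_0$ is a finite constant, $\Lambda_k\to+\infty$ implies $\eta_0+\beta\Lambda_k\to+\infty$, so $\pi_k\to1$. Likewise, $\Lambda_k\to-\infty$ implies $\eta_0+\beta\Lambda_k\to-\infty$, so $\pi_k\to0$.

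I do not expect a substantive obstacle, since the argument is deterministic and pathwise given the likelihood sequence. The only point needing care is the well-posedness step: one must ensure the belief never reaches $0$ or $1$ in finite time, so that the log-odds are finite and the cancellation of $Z_k$ is legitimate. This is exactly what the hypotheses $\pi_0\in(0,1)$ and strict positivity of $\Lc_k(i)$ provide. I would also add a short remark that the tempering exponent $\beta$ only rescales the evidence and does not change its sign, which is why the conclusion holds for every $\beta>0$.
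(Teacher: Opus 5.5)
Your proposal is correct and follows essentially the same route as the paper: pass to the log-odds $\log\frac{\pi_k}{1-\pi_k}$, cancel the normalizer by taking the ratio of the two updates, telescope to $z_k=z_0+\beta\Lambda_k$, and conclude from $\beta>0$ and the finiteness of $z_0$. Your well-posedness step, which checks both $\pi_k>0$ and $1-\pi_k>0$ by induction, is slightly more careful than the paper's, which only states that $\pi_k$ stays strictly positive.
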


\begin{proof} 
Since ${\Lc_k(0),\Lc_k(1)}>0$ and the initial belief has full support, $\pi_k$, remains strictly positive for all $k$. Taking the ratio of the two updates yields
\begin{align*}
    \frac{\pi_k}{1-\pi_k} = \frac{\pi_{k-1}}{1-\pi_{k-1}} \left( \frac{{\Lc_k(0)}}{{\Lc_k(1)}} \right)^\beta.
\end{align*}
Define the log-belief ratio $z_k \triangleq \log \frac{\pi_k}{1-\pi_k}$, which gives us $z_k = z_{k-1} + \beta \log \frac{{\Lc_k(0)}}{{\Lc_k(1)}}$. Iterating from $0$ to $k$ gives us:
\begin{align*}
    z_k = z_0 + \beta \textstyle \sum_{t=1}^{k} \log \frac{{\Lc_t(0)}}{{\Lc_t(1)}} = z_0+\beta\Lambda_k.
\end{align*}
Note that $\pi_0 \in (0,1)$ leads to ${-\infty} < z_0 <+\infty$. If $\lim_{k \ra \infty}\Lambda_k = +\infty$, then $\lim_{k \ra \infty}z_k = +\infty$ for $\beta>0$, resulting in $\lim_{k \ra \infty}\frac{\pi_k}{1-\pi_k} = +\infty$. This implies $\lim_{k \ra \infty} \pi_k = 1$. We obtain the analogous result if $\lim_{k \ra \infty}\Lambda_k = -\infty$.
\end{proof}

Proposition~\ref{prop:asym_belief} shows that the belief update \eqref{eq:binary_bayes} can identify the true hidden mode when sufficient likelihood evidence \eqref{acc_log_likelihood} accumulates. In particular, if the accumulated log-likelihood ratio diverges in favor of one mode, then the posterior belief concentrates on that mode. In the next part, we present how the Bayesian belief update \eqref{eq:binary_bayes} can be used to design a control law.

\subsection{Belief-dependent LQR}
With the aim of showing the resilience-induced blindness phenomenon in even standard cases, we adopt the standard LQR setting for designing the controller.
For a fixed $\pi_k$ at time $k$, the objective function \eqref{def_JC_general} of the controller is converted into the following subjective LQR problem:
\begin{align}
	\min_{\{u_t\}_{t\ge k}}\sum_{t=k}^{\infty}\bar x_t\T Q\bar x_t+u_t\T Ru_t,
	\label{def_LQR_av}
\end{align}
where
\begin{align}
	\bar x_{t+1} &= \bar A_k\bar x_t+\bar B_ku_t, \non \\
	\bar A_k&=\pi_kA_0+(1-\pi_k)A_1, \non \\
	\bar B_k&=\pi_kB_0+(1-\pi_k)B_1. \non 
\end{align}
Here, we assume that the system $(\bar A_k, \bar B_k)$ is controllable for any $\pi_k \in (0,1)$.
Furthermore, the controller also considers the belief-weighted state estimate as:
\begin{align}
	\hat x_k&=\pi_k\hat x_k^{(0)}+(1-\pi_k)\hat x_k^{(1)}. \label{def_mode_est}
\end{align}

Solving \eqref{def_LQR_av} is standard using the associated Riccati equation with $P_k \succ 0$:
\begin{align}
P_k=&~Q+\bar A_k\T P_k\bar A_k
\non \\
&-\bar A_k\T P_k\bar B_k(R+\bar B_k\T P_k\bar B_k)^{-1}
\bar B_k\T P_k\bar A_k. \label{riccati_eq}
\end{align}
The belief-dependent gain and control input are
\begin{align}
K(\pi_k)&=(R+\bar B_k\T P_k\bar B_k)^{-1}\bar B_k\T P_k\bar A_k,
\non \\
u_k&=-K(\pi_k)\bar x_k.
\label{eq:lqr_gain_compact}
\end{align}
{Using a certainty-equivalent implementation for the considered belief-dependent architecture,} we can use a state-estimate feedback controller for \eqref{eq:lqr_gain_compact} as:
\begin{align}
	u_k&=-K(\pi_k)\hat x_k. \label{eq:lqr_gain_compact_hatx}
\end{align}

Next, we analyze how the belief-dependent LQR control policy \eqref{eq:lqr_gain_compact_hatx} interacts with the hidden-mode detector \eqref{eq:binary_bayes}. The interaction, resulting from the best responses of the two components \eqref{def_br}, leads to an unexpected phenomenon in the following section.
%the two hidden modes. 

\begin{remark}
{On each constant-mode interval, sufficient likelihood evidence concentrates the belief on the active mode, so the belief-weighted estimate approaches the corresponding mode-conditioned estimate. Under Schur stability of $(I-L_iC_i)A_i$ and standard Kalman-filter conditions, the estimation-error covariance is stable.} 
\end{remark}

\section{Resilience-Induced Blindness Phenomenon}
This section characterizes the resilience-induced blindness mechanism for the detector--controller equilibrium introduced in the previous section. We first derive a sufficient resilience condition for the LQR best response and then quantify how the resulting closed-loop contraction limits the likelihood evidence available to the detector. This analysis explains how mutual best responses to the two local objectives can nevertheless yield an inefficient system-level outcome.

\subsection{Belief-dependent Resilience}
Within this local interval-wise analysis, mode index $i=1$ represents the compromised dynamics. Therefore, a controller, which can stabilize the system under the two modes, is referred to as resilient to the compromised mode. 
The following lemma quantifies the resilience margin of the belief-dependent controller \eqref{eq:lqr_gain_compact} for either active mode $i\in\Theta$ on a constant-mode interval. 
\begin{lemma}[Belief-dependent LQR resilience]
\label{lem:robust_stability}
%Fix $\pi_k\in[0,1]$ and define
Consider the system \eqref{eq:true-dynamics}-\eqref{eq:true-output} under the control law \eqref{eq:lqr_gain_compact} in the noise-free case. Let us employ several auxiliary notations:
\begin{align}
\bar A_{\rm cl}&\triangleq\bar A_k-\bar B_kK(\pi_k), &
S&\triangleq Q+K(\pi_k)\T RK(\pi_k),\nonumber\\
\alpha&\triangleq\lambda_{\min}(S), &
\eta_i&\triangleq\begin{cases}1-\pi_k,&i=0,\\ \pi_k,&i=1,
\end{cases}
\end{align}
\begin{align}
\epsilon_{\rm av}^i \triangleq \eta_i\norm{A_1-A_0}+\eta_i\norm{K(\pi_k)}\norm{B_1-B_0}.
\label{lm1:epsilon_av}
\end{align}
If the following inequality holds true
\begin{align}
\alpha>2\norm{P_k}\norm{\bar A_{\rm cl}}\epsilon_{\rm av}^{i}+\norm{P_k}(\epsilon_{\rm av}^{i})^2,
\end{align}
then, there exists $m_i>0$ such that
\begin{align}
A_{\rm cl}^{i\top}P_kA_{\rm cl}^{i}-P_k\preceq -m_i I,
\label{eq:lyapunov_margin}
\end{align}
where $A_{\rm cl}^{i} \triangleq A_i-B_i K(\pi_k)$ and
\begin{align}
m_i \triangleq \alpha-2\norm{P_k}\norm{\bar A_{\rm cl}}\epsilon_{\rm av}^{i}-\norm{P_k}(\epsilon_{\rm av}^{i})^2.
\label{eq:resilience_margin}
\end{align}
Consequently, $A_{\rm cl}^{i}$ is Schur stable.
\end{lemma}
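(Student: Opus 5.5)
The argument is a perturbation of the nominal Lyapunov identity. The first step is to rewrite the Riccati equation \eqref{riccati_eq} in closed-loop form for the averaged pair $(\bar A_k,\bar B_k)$. With $K=K(\pi_k)$ from \eqref{eq:lqr_gain_compact}, the standard completion-of-squares identity gives
\begin{align*}
\bar A_{\rm cl}\T P_k\bar A_{\rm cl}-P_k=-(Q+K\T RK)=-S\preceq-\alpha I .
\end{align*}
This is the usual equivalence between the Riccati equation and the closed-loop Lyapunov equation. It is obtained by expanding $(\bar A_k-\bar B_kK)\T P_k(\bar A_k-\bar B_kK)$ and using $(R+\bar B_k\T P_k\bar B_k)K=\bar B_k\T P_k\bar A_k$.

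The second step writes the true closed-loop matrix for active mode $i$ as a perturbation of the averaged one: $A_{\rm cl}^i=\bar A_{\rm cl}+\Delta_i$. Since $A_i-\bar A_k$ equals $(1-\pi_k)(A_0-A_1)$ for $i=0$ and $\pi_k(A_1-A_0)$ for $i=1$, and similarly for $B$, we have $A_i-\bar A_k=\pm\eta_i(A_1-A_0)$ and $B_i-\bar B_k=\pm\eta_i(B_1-B_0)$. Hence
\begin{align*}
\Delta_i=(A_i-\bar A_k)-(B_i-\bar B_k)K,\qquad \norm{\Delta_i}\le\epsilon_{\rm av}^i .
\end{align*}
This is exactly the role of the definition \eqref{lm1:epsilon_av}.

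The third step expands the Lyapunov difference:
\begin{align*}
A_{\rm cl}^{i\top}P_kA_{\rm cl}^{i}-P_k=-S+\bar A_{\rm cl}\T P_k\Delta_i+\Delta_i\T P_k\bar A_{\rm cl}+\Delta_i\T P_k\Delta_i .
\end{align*}
For any unit vector $v$, the cross terms contribute at most $2\norm{P_k}\norm{\bar A_{\rm cl}}\epsilon_{\rm av}^i$ and the quadratic term at most $\norm{P_k}(\epsilon_{\rm av}^i)^2$, using Cauchy--Schwarz and submultiplicativity of the spectral norm. Therefore $v\T(A_{\rm cl}^{i\top}P_kA_{\rm cl}^{i}-P_k)v\le-\alpha+2\norm{P_k}\norm{\bar A_{\rm cl}}\epsilon_{\rm av}^i+\norm{P_k}(\epsilon_{\rm av}^i)^2=-m_i$. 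This is \eqref{eq:lyapunov_margin}, and the hypothesis gives $m_i>0$.

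Schur stability then follows from the standard Lyapunov argument. Take an eigenpair $A_{\rm cl}^iv=\lambda v$ with $v\ne 0$. Then $(|\lambda|^2-1)v^*P_kv\le-m_i\norm{v}^2<0$, and since $P_k\succ0$ this forces $|\lambda|<1$. Equivalently, $V(x)=x\T P_kx$ strictly decreases along the noise-free closed loop.

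I expect the main obstacle to be bookkeeping rather than depth. The sign and weight conventions for $\eta_i$ need care, in particular that mode $0$ deviates from the average by a factor $(1-\pi_k)$. The Riccati-to-Lyapunov identity must also be stated carefully, since \eqref{riccati_eq} is written in the non-closed-loop form. The norms are all spectral, so every bound is elementary.
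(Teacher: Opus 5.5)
Your proposal is correct and follows the paper's proof essentially step for step. Both rewrite the Riccati equation as the closed-loop Lyapunov identity $\bar A_{\rm cl}\T P_k\bar A_{\rm cl}-P_k=-S$, decompose $A_{\rm cl}^i=\bar A_{\rm cl}+E_i$ with $\norm{E_i}\le\epsilon_{\rm av}^i$, and apply the same cross-term and quadratic-term bounds; you additionally spell out the completion-of-squares identity and the eigenvector argument for Schur stability, which the paper only asserts.
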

\begin{proof}
The Riccati equation \eqref{riccati_eq} gives $\bar A_{\rm cl}\T P_k\bar A_{\rm cl}-P_k=-S\preceq-\alpha I$. For the active mode index $i$, write $A_{\rm cl}^{i}=\bar A_{\rm cl}+E_i$, where $E_i=(A_i-\bar A_k)-(B_i-\bar B_k)K(\pi_k)$ and $\norm{E_i}\le\epsilon_{\rm av}^{i}$. Given $P_k$ and a Lyapunov function $V(x_k) = x_k^\top P_k x_k$, we expand its difference as follows:
\begin{align*}
&A_{\rm cl}^{i \top} P_k A_{\rm cl}^{i}-P_k = 
(\bar A_{\rm cl}+E_i)^\top
P_k (\bar A_{\rm cl}+E_i) - P_k\\
=~&
\bar A_{\rm cl}^\top P_k \bar A_{\rm cl} - P_k + \bar A_{\rm cl}^\top P_kE_i + E_i^\top P_k\bar A_{\rm cl} + E_i^\top P_kE_i.
\end{align*}
The first two terms are bounded by $-\alpha I$. For any vector $x$, the last three terms give us:
$x^\top \left( \bar A_{\rm cl}^\top P_kE_i
+ E_i^\top P_k\bar A_{\rm cl} \right)x
\le 2\|P_k\|\|\bar A_{\rm cl}\| \epsilon_{\rm av}^{i}\|x\|^2$, and
$x^\top E_i^\top P_kE_i x \le \|P_k\| \left(\epsilon_{\rm av}^{i}\right)^2 \|x\|^2$.
Consequently, by the definition \eqref{eq:resilience_margin}, one obtains \eqref{eq:lyapunov_margin}, resulting in Schur stable $A_{\rm cl}^{i}$.
\end{proof}

\subsection{Resilience-to-Detection Bound}
In this part, we leverage the resilience margin $m_i$, assumed to be positive, in Lemma~\ref{lem:robust_stability} to analyze the speed of the detector. Let us define the log-odds and log-likelihood ratio
\begin{align}
z_k \triangleq \log\frac{\pi_k}{1-\pi_k},\qquad
\ell_k \triangleq \log\frac{\Lc_k(0)}{\Lc_k(1)}.
\label{eq:log_odds_llr_compact}
\end{align}
From \eqref{eq:binary_bayes}, one has $z_k=z_{k-1}+\ell_k$. Hence, $|\ell_k|$ is the instantaneous speed of the belief update. We aim to quantify the upper bound {conditional on deterministic trajectories} for $|\ell_k|$ using $m_i$ in the following theorem.

\begin{theorem}[Resilience-to-detection bound]
\label{lem:res_det_bound}
Consider the local noise-free trajectories with gain $K(\pi_k)$ and a fixed belief $\pi_k$. Recall the resilience margin $m_i>0$ in \eqref{eq:resilience_margin} and define
\begin{align}
\rho_i= \sqrt{1-\frac{m_i}{\lambda_{\max}(P_k)}},\qquad
c_P=\sqrt{\frac{\lambda_{\max}(P_k)}{\lambda_{\min}(P_k)}}.
\end{align}
Let us assume: \\
(A1) $\Sigma_0=\Sigma_1=\Sigma\succ0$; 
\\
(A2) $\norm{x_{k}^{(i)}-\hat x_{k}^{(i)}}\le\bar e_k$ for $i\in\{0,1\}$; and
\\
(A3) $\norm{K(\pi_{k})}\le\bar K$.
\\
Then, the following inequality holds true
\begin{align}
&|z_{k+1}-z_{k}|=|\ell_{k+1}| \leq 
\bar r_{k+1}\kappa_\Sigma \omega_{k+1},
\label{eq:single_lemma_bound}
\end{align}
where $\kappa_\Sigma=\norm{\Sigma^{-1/2}}$, $\bar r_{k+1}=\max_{i\in\{0,1\}}\normM{r_{k+1}^{(i)}}{\Sigma^{-1}}$, and
\begin{align}
&\Gamma_e =  2\norm{C_1A_1} +\norm{C_1A_1-C_0A_0}+\bar K\norm{C_1B_1-C_0B_0},
\non \\
&\omega_{k+1} = \big( \norm{C_1 A_1} c_P (\rho_0 + \rho_1) \non \\
&\hspace{1.2cm}
+ \bar K c_P \max\{\rho_0, \rho_1\} \norm{C_1 B_1 - C_0 B_0} \non \\
&\hspace{1.2cm}
+ \norm{C_1 A_1 - C_0 A_0} c_P \rho_0   \big) \norm{x_{k-1}} + \Gamma_e \bar e_k.
\label{eq:Gamma_xe}
\end{align}
\end{theorem}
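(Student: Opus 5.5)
We start from the Gaussian likelihood under (A1). Since $\Sigma_0=\Sigma_1=\Sigma$, the normalizing constants cancel in $\ell_{k+1}$, leaving
\begin{align*}
\ell_{k+1}=\tfrac12\big(\normM{r_{k+1}^{(1)}}{\Sigma^{-1}}^2-\normM{r_{k+1}^{(0)}}{\Sigma^{-1}}^2\big).
\end{align*}
Factoring the difference of squares of $\Sigma^{-1}$-weighted norms, and using the reverse triangle inequality for that norm, we get
\begin{align*}
|\ell_{k+1}|\le\tfrac12\big(\normM{r^{(1)}_{k+1}}{\Sigma^{-1}}+\normM{r^{(0)}_{k+1}}{\Sigma^{-1}}\big)\normM{r^{(1)}_{k+1}-r^{(0)}_{k+1}}{\Sigma^{-1}}.
\end{align*}
The first factor is at most $\bar r_{k+1}$. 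The second factor is at most $\kappa_\Sigma\norm{r^{(1)}_{k+1}-r^{(0)}_{k+1}}$, since $\normM{v}{\Sigma^{-1}}=\norm{\Sigma^{-1/2}v}$. It therefore remains to show $\norm{r^{(1)}_{k+1}-r^{(0)}_{k+1}}\le\omega_{k+1}$.

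By \eqref{eq:mode_observer} and \eqref{eq:residual}, $r^{(1)}_{k+1}-r^{(0)}_{k+1}=\hat y^{(0)}_{k+1|k}-\hat y^{(1)}_{k+1|k}=C_0A_0\hat x^{(0)}_k-C_1A_1\hat x^{(1)}_k+(C_0B_0-C_1B_1)u_k$. Let $x_k^{(i)}$ denote the noise-free trajectory under mode $i$. We replace each $\hat x^{(i)}_k$ by $x^{(i)}_k$ plus an error bounded by $\bar e_k$ via (A2), and substitute $u_k=-K(\pi_k)\hat x_k$, expanding $\hat x_k$ into true states plus errors as well. The estimation-error contributions are then collected into $\Gamma_e\bar e_k$. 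We bound them using $\norm{C_0A_0}\le\norm{C_1A_1}+\norm{C_1A_1-C_0A_0}$ and (A3); together with the direct $\norm{C_1A_1}$ error term this explains the factor $2\norm{C_1A_1}$. For the true-state part we regroup as
\begin{align*}
C_1A_1(x^{(0)}_k-x^{(1)}_k)+(C_0A_0-C_1A_1)x^{(0)}_k+(C_1B_1-C_0B_0)K(\pi_k)\,x_k .
\end{align*}
We bound $\norm{x^{(0)}_k-x^{(1)}_k}\le\norm{x^{(0)}_k}+\norm{x^{(1)}_k}$. The state entering the control term is bounded by $\max_i\norm{x^{(i)}_k}$.

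The key step, and the one linking resilience to blindness, is the one-step contraction from Lemma~\ref{lem:robust_stability}. Both mode trajectories start from the common $x_{k-1}$ and evolve as $x^{(i)}_k=A^i_{\rm cl}x_{k-1}$ (noise-free, fixed gain). By \eqref{eq:lyapunov_margin}, $V(x^{(i)}_k)\le V(x_{k-1})-m_i\norm{x_{k-1}}^2\le(1-m_i/\lambda_{\max}(P_k))V(x_{k-1})=\rho_i^2V(x_{k-1})$. Sandwiching with $\lambda_{\min}(P_k)\norm{x}^2\le V(x)\le\lambda_{\max}(P_k)\norm{x}^2$ gives $\norm{x^{(i)}_k}\le c_P\rho_i\norm{x_{k-1}}$. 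Substituting these bounds yields exactly the three $\norm{x_{k-1}}$ terms in \eqref{eq:Gamma_xe}: $c_P(\rho_0+\rho_1)$ for the mode-difference term, $c_P\rho_0$ for the $C_1A_1-C_0A_0$ term, and $\bar Kc_P\max\{\rho_0,\rho_1\}$ for the input term.

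The main obstacle is the bookkeeping of $u_k=-K(\pi_k)\hat x_k$. Here $\hat x_k$ is the belief-weighted estimate \eqref{def_mode_est}, a convex combination of the two mode estimates. Its true-state part is therefore bounded by $\max_i\norm{x^{(i)}_k}$, and its error part by $\bar e_k$, which is absorbed into $\Gamma_e$. The estimation errors must be allocated so that the constant matches $\Gamma_e$. If the stated constant does not come out exactly, we would accept a slightly larger constant, since the proof structure is unchanged. We also need to fix the interpretation that both hypothetical mode trajectories emanate from the same $x_{k-1}$; we would state this explicitly as part of the ``local noise-free trajectories'' assumption.
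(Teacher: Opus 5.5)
Your proposal is correct and follows the paper's proof essentially step for step. It uses the same difference-of-squares bound on $\ell_{k+1}$ under (A1), the same predictor-based split of the residual difference into a $C_1A_1$ mode-gap term, a $(C_1A_1-C_0A_0)$ term and a $(C_1B_1-C_0B_0)$ input term, and the same one-step Lyapunov contraction $\norm{x_k^{(i)}}\le c_P\rho_i\norm{x_{k-1}}$ from a common $x_{k-1}$ (the paper also notes $m_iI\preceq P_k$, so $\rho_i$ is well defined). Your error allocation reproduces $\Gamma_e$ exactly, so no enlarged constant is needed, and bounding $u_k$ through the convex combination of mode estimates is slightly more careful than the paper's shortcut $\norm{u_k}\le\bar K(\norm{x_k}+\bar e_k)$.
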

\begin{proof}
The proof is organized as follows: 
\begin{enumerate}
    \item[P1)] We first show that $m_i$ induces exponential decay of each mode-indexed state trajectory, yielding the upper bounds for the state, the state estimate, and the control input.
    \item[P2)] These upper bounds in P1) lead to an upper bound for $(r_{k+1}^{(0)}-r_{k+1}^{(1)})$. 
    \item[P3)] We show the relationship between $(r_{k+1}^{(0)}-r_{k+1}^{(1)})$ and $\ell(k+1)$ to obtain \eqref{eq:single_lemma_bound}.
\end{enumerate}

Let us begin with the first procedure P1) in the following.

P1) From Lemma~\ref{lem:robust_stability}, $m_i>0$ implies \eqref{eq:lyapunov_margin} and $m_i I \preceq P_k$, leading to $0 < \rho_i < 1$. Hence, in the noise-free case, \eqref{eq:lyapunov_margin} give us
\begin{align}
    \|x_{k}^{(i)}\|
\le
c_P \,\rho_i\,\|x_{k-1}\|.
\label{eq:proof_state_converge}
\end{align}
{More generally, iterating the Lyapunov inequality from time $k$ gives $\|x_{k+j}^{(i)}\|\le c_P\rho_i^j\|x_k\|$ for $j\geq 0$. Thus, exponential convergence follows from $\rho_i<1$ even when the one-step Euclidean factor $c_P\rho_i$ is not smaller than one: $c_P$ is a norm-equivalence prefactor, whereas $\rho_i$ determines the asymptotic decay rate. For the one-step residual-separation analysis below, setting $j=1$ yields \eqref{eq:proof_state_converge}. Accordingly,}
%Therefore, 
for the two mode-dependent trajectories starting from the same state $x_{k-1}$, one has
\begin{align}
\|x_{k}^{(0)}-x_{k}^{(1)}\|
\le
c_P \big(\rho_0+\rho_1\big)\|x_{k-1}\|.
\label{eq:proof_state_gap}
\end{align}
On the other hand, the assumption (A2) gives us:
\begin{align}
    \|\hat x_{k}^{(1)}-\hat x_{k}^{(0)}\|
&\le
\|x_{k}^{(1)}-x_{k}^{(0)}\|+2\bar e_k,
\label{eq:proof_state_hat_gap}
\\
\|\hat x_{k}^{(0)}\|
&\le
\|x_{k}^{(0)}\|+\bar e_k. \label{x_hat_bound}
\end{align}

Moreover, since $u_{k}=-K(\pi_{k})\hat x_{k}$ and $\|K(\pi_{k})\|\le \bar K$ in the assumption (A3), one has
\begin{align}
    \|u_{k}\|
\le~ \bar K \big( \norm{x_k} + \bar e_k \big).
\label{u_tau_bound}
\end{align}

P2) Using $r_{k+1}^{(i)}=y_{k+1}-\hat y_{k+1|k}^{(i)}$ in \eqref{eq:residual},
one has
\begin{align*}
    r_{k+1}^{(0)}-r_{k+1}^{(1)} = \hat y_{k+1|k}^{(1)}-\hat y_{k+1|k}^{(0)}.
\end{align*}
Next, the predictor \eqref{eq:mode_observer} gives us
\begin{align}
r_{k+1}^{(0)}-r_{k+1}^{(1)}
&=
C_1A_1\bigl(\hat x_{k}^{(1)}-\hat x_{k}^{(0)}\bigr)
\non \\
&\quad+
(C_1A_1-C_0A_0)\hat x_{k}^{(0)}
\non \\
&\quad+
(C_1B_1-C_0B_0)u_{k}.
\label{r_tau_bound}
\end{align}
Substituting \eqref{eq:proof_state_converge}, \eqref{eq:proof_state_gap}, \eqref{eq:proof_state_hat_gap}, \eqref{x_hat_bound}, and \eqref{u_tau_bound} into \eqref{r_tau_bound} yields
\begin{align}
&\|r_{k+1}^{(0)}-r_{k+1}^{(1)}\|
\le \omega_{k+1},
\end{align}
where $\omega_{k+1}$ and $\Gamma_e$ are defined in
\eqref{eq:Gamma_xe}. Hence, one has
\begin{align}
&d_r(k+1)
\triangleq
\|r_{k+1}^{(0)}-r_{k+1}^{(1)}\|_{\Sigma^{-1}} \leq
\kappa_\Sigma \omega_{k+1}.
\label{eq:proof_dr_bound}
\end{align}

P3) Finally, the Bayesian belief update \eqref{eq:binary_bayes} and the assumption (A1) give us:
\begin{align*}
    \ell_{k+1} = -\frac12 \left( \|r_{k+1}^{(0)}\|_{\Sigma^{-1}}^2 - \|r_{k+1}^{(1)}\|_{\Sigma^{-1}}^2 \right).
\end{align*}
Furthermore, from \eqref{eq:Gamma_xe} and the definition of $d_r(k+1)$ with its upper bound in \eqref{eq:proof_dr_bound}, one has
\begin{align}
    |\ell_{k+1}| &\le \frac12 \|r_{k+1}^{(0)}-r_{k+1}^{(1)}\|_{\Sigma^{-1}} \left( \|r_{k+1}^{(0)}\|_{\Sigma^{-1}} + \|r_{k+1}^{(1)}\|_{\Sigma^{-1}} \right) \non \\ &\le 
    \bar r_{k+1}\, d_r(k+1).
\label{eq:belief_speed_bound}
\end{align}
Combining this bound \eqref{eq:belief_speed_bound} with \eqref{eq:proof_dr_bound} and using
the fact that $z_k = z_{k-1} + \ell_k$ yields \eqref{eq:single_lemma_bound}.
\end{proof}

{Theorem~\ref{lem:res_det_bound} provides a quantitative link between resilience and detection speed by showing that the controller's best response can reduce the evidence available to the detector. Therefore, even though the controller and detector remain best responses to their respective local objectives \eqref{def_br} within the stated policy classes, the resulting mutual best responses can be inefficient for the overall resilient-control system.} Since $\rho_i \!=\!\sqrt{1-m_i/\lambda_{\max}(P_k)}$, a larger margin $m_i$ yields smaller contraction factors, which
reduce state and innovation separation and hence shrink $|\ell_{k+1}|$ with a smaller upper bound $\omega_{k+1}$. 
The cumulative evidence after a mode switch can therefore grow slowly, keeping the belief near its previous value and creating persistent gain mismatch. This explains the interaction phenomenon between the detector and the controller, which is summarized below.

\begin{tcolorbox}[phenomenonbox]
\textbf{Resilience-induced blindness phenomenon:} A large resilience margin $m_i$ yields strong contraction, reducing $d_r(k+1)$. Since $|\ell_{k+1}| \le \bar r_{k+1} d_r(k+1)$ and $z_{k+1}=z_{k}+\ell_{k+1}$, the log-odds $z_{k+1}$ can evolve slowly after a mode switch, causing persistent belief mismatch and degraded closed-loop performance.
\end{tcolorbox}

\begin{remark}
\label{rem:probability}
{Under the Gaussian estimation model, the error covariance is computable, so $\bar e_k$ in Theorem~\ref{lem:res_det_bound}-(A2) can be selected from a prescribed confidence region. Thus, Theorem~\ref{lem:res_det_bound} holds on a quantifiable high-probability event. For example, for a scalar Gaussian error, the classical three-standard-deviation rule places approximately $99.7\%$ of the probability mass within three standard deviations of the mean. Outside this event, an atypical noise realization may weaken or mask the blindness effect on an individual trajectory. Furthermore, Gaussian disturbances can approximate aggregate uncertainty from many weak sources, while the Lyapunov contraction and innovation-separation arguments underlying the analysis have nonlinear counterparts. These observations suggest extensions through local or incremental Lyapunov conditions and nonlinear likelihood bounds, whose formal development is left for future work.}
\end{remark}

\section{Numerical Simulation}
{This section illustrates the resilience-induced blindness phenomenon and evaluates its sensitivity to the detector parameter, switching period, and Gaussian noise realization. The simulation code is available at 
\href{https://github.com/tungnguyenkstnk58/Detection_Control_Game}{\textit{https://tinyurl.com/DetCtrG}}}.
We consider the two-mode linear-Gaussian system \eqref{eq:true-dynamics}--\eqref{eq:true-output} with
\begin{align}
    A_0 &= \ba{ccc}
    1.04 & 0.10 & 0 \\
    0 & 0.96 & 0.08 \\
    0 & 0 & 0.90
    \ea,
    A_1 = \ba{ccc}
    1.24 &  0.12 &   0 \\
    0 &   1.14 &   0.11 \\
    0 &   0 &   0.96
    \ea, \nonumber\\
    B_0 &= \ba{c}
    1.00 \\ 0.35 \\ 0.15
    \ea,\quad
    B_1 = \ba{c}
    0.6500 \\
    0.2275 \\
    0.0975
    \ea, \nonumber\\
    C_0 &= \ba{cc}
    1 & 0 \\
    0 & 0 \\
    0 & 1
    \ea^\top,\quad
    C_1 = \ba{cc}
    0.86 & 0 \\
    0 & 0 \\
    0 & 0.9
    \ea^\top, \nonumber\\
    \Sigma_w &=0.01I,\quad \Sigma_v=0.03I,\quad
    \Sigma_0=\Sigma_1=0.12I,\quad Q=2I. \nonumber
\end{align}
The compromised mode $i=1$ represents degraded dynamics, reduced actuation, and distorted sensing. We set $R=r_sI$ with $r_s\in\{0.25,0.5,1,2,4,8,16\}$.
Due to the fact that changing $R$ changes the LQR design objective, performance is evaluated using the common realized state cost:
\begin{align}
J_{\rm total}(r_s)\triangleq \textstyle \sum_{k=0}^{T-1}\norm{x_k}^2,
\label{eq:J_total_empirical}
\end{align}
where $T = 600$. {For each scenario, we normalize this cost by the aggressive baseline $r_s=0.25$:}
\begin{align}
\mathcal J(r_s)\triangleq100\frac{J_{\rm total}(r_s)}{J_{\rm total}(0.25)}\ \%.
\label{eq:relative_cost}
\end{align}
{Thus, $\mathcal J(r_s)>100\%$ denotes a larger realized cost than the baseline, whereas $\mathcal J(r_s)<100\%$ denotes a reversal in which a less aggressive controller outperforms the baseline.}

For each switch at $k_s$, the detection delay $T_{\rm det}(k_s)$ is the first time at which the posterior of the newly active mode exceeds $0.99$ and the aggregated delay is $T_{\rm det}^{\rm agg}=\sum_{k_s}T_{\rm det}(k_s)$. The detector is not informed of the switching times, and its posterior is not reset. Hence, one has
\begin{align}
z_{k_s+n}=z_{k_s}+\textstyle \beta\sum_{t=k_s+1}^{k_s+n}\ell_t.
\label{eq:post_switch_decomposition}
\end{align}
{Here, the inherited log-odds $z_{k_s}$ represent the detector's confidence in the previously active mode at the switching time. A strongly inherited belief requires more post-switch evidence to reverse the detector's decision. Meanwhile, the likelihood increments $\ell_t$ determine how quickly such contrary evidence is accumulated. Therefore, the detection delay depends jointly on posterior inertia and the rate of new evidence generation, with Theorem~\ref{lem:res_det_bound} characterizing the latter effect.}
\begin{figure}[!t]
\centering
\includegraphics[width=0.96\linewidth]{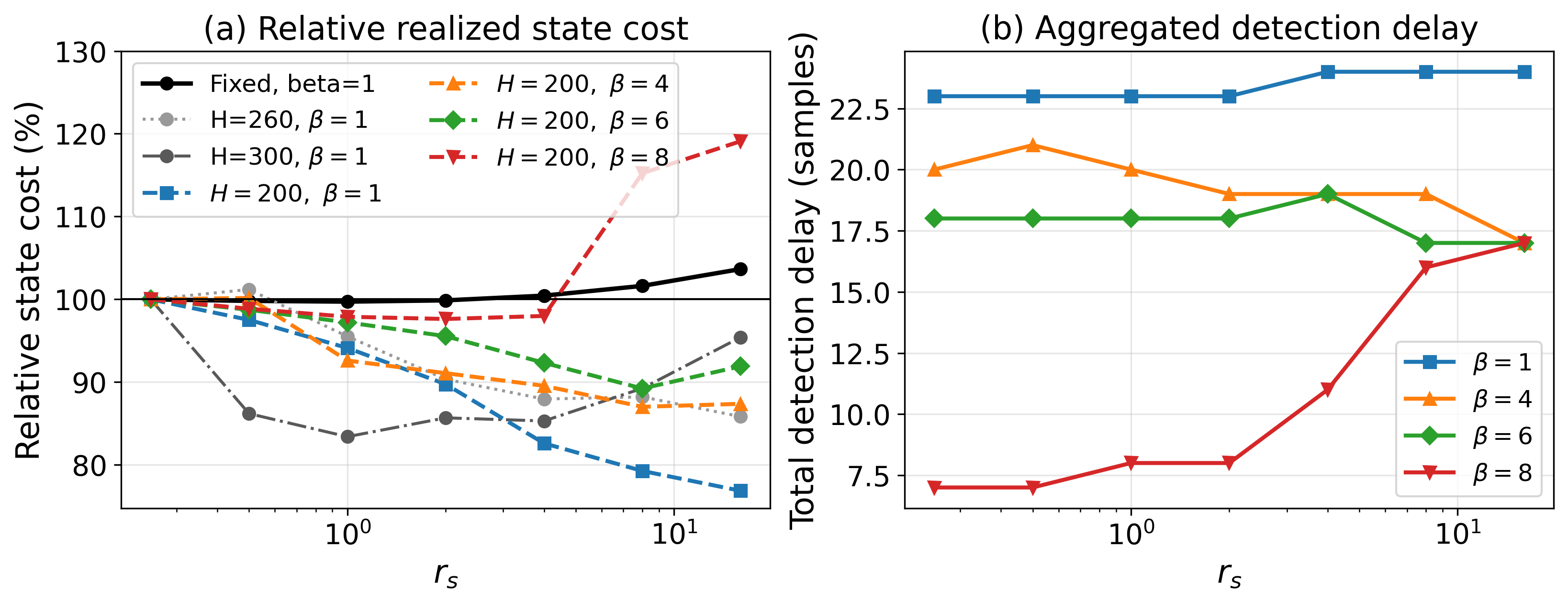}
\caption{{(a) Realized state cost relative to the $r_s=0.25$ baseline for the fixed-mode case and switching cases with different switching periods $H$ and detector sensitivities $\beta$. Values below $100\%$ reveal the resilience-induced blindness phenomenon through a reversal of the fixed-mode cost ordering. (b) Aggregated detection delay for the reference switching period $H=200$.}}
\label{fig:illustrative_relative_cost_delay}
\end{figure}

{The resilience-induced blindness phenomenon, presented in the previous section, is illustrated in
Fig.~\ref{fig:illustrative_relative_cost_delay}. Under the fixed mode, $\mathcal J(r_s)$ generally increases with $r_s$, reflecting the expected loss of regulation performance as the controller becomes less aggressive. Under mode switching with $\beta=1$, the ordering reverses over a broad range: several larger values of $r_s$ yield $\mathcal J(r_s)<100\%$. This behavior is consistent with the resilience-induced blindness phenomenon. Strong regulation suppresses state deviations but can also suppress mode-dependent evidence, thereby delaying belief adaptation and degrading performance. The switching-period curves in Fig.~\ref{fig:illustrative_relative_cost_delay}(a) show that the realized tradeoff depends on the time available for evidence accumulation between switches. Fig.~\ref{fig:illustrative_relative_cost_delay}(b) further shows that increasing $\beta$ from $1$ to $4$, $6$, or $8$ can substantially shorten the aggregated detection delay. Correspondingly, the cost curves for larger $\beta$ partially move toward the fixed-mode ordering.}

\begin{figure}[!t]
\centering
\includegraphics[width=0.96\linewidth]{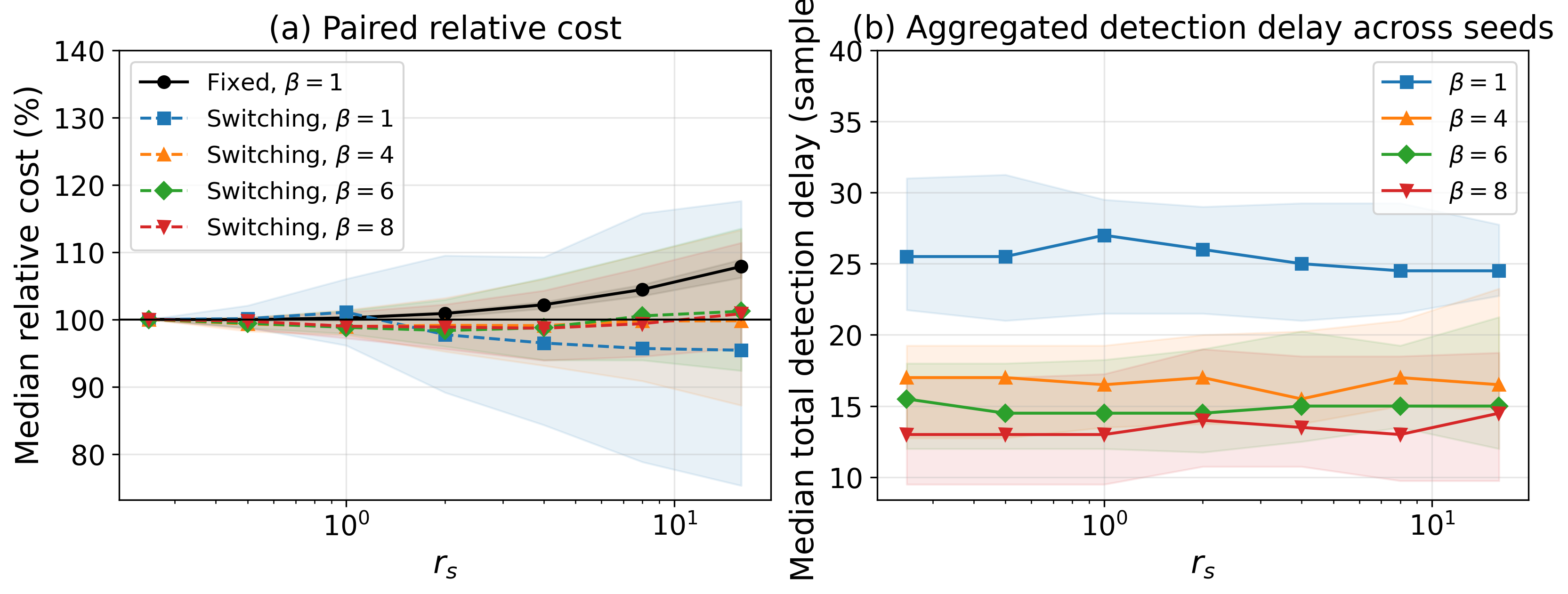}
\caption{{Paired Monte Carlo study across Gaussian noise realizations. For each realization, every $r_s$ uses the same process and measurement noise sequences and is normalized by the corresponding $r_s=0.25$ cost. The curves show the median, and the shaded regions show the interquartile range between the 25th and 75th percentiles.}}
\label{fig:monte_carlo_relative_cost_delay}
\end{figure}
{Fig.~\ref{fig:monte_carlo_relative_cost_delay} evaluates stochastic variability using paired common-noise comparisons. Since occasional large transients make the realized costs strongly skewed, we report the median and interquartile range. The median of the $\beta=1$ switching case still reflects the resilience-induced blindness phenomenon through a cost reversal over part of the tested range, although some realizations do not exhibit the reversal, consistent with Remark~\ref{rem:probability}. Increasing $\beta$ reduces the median aggregated detection delay and partially moves the switching cost toward the fixed-mode ordering, while individual Gaussian trajectories may strengthen, weaken, or mask the reversal.}

\section{Conclusion}
This paper formulated hidden-mode detection and belief-dependent control as a game between two decision makers with different objectives. {Within the stated policy classes, their locally consistent best responses can be inefficient for the overall resilient-control objective.} For two-mode linear-Gaussian systems, we characterized the resilience-induced blindness phenomenon by linking the LQR resilience margin to log-likelihood evidence. {A representative realization and paired Monte Carlo study showed that fixed-mode cost ordering can reverse under switching, although Gaussian noise may mask the reversal in individual trajectories. Future work will investigate broader equilibrium characterization, Stackelberg co-design in which one module anticipates the other's response, centralized cooperative co-design, context-dependent selection of $\beta$, and extensions to nonlinear and switching systems.
} 
\bibliographystyle{IEEEtran}
\bibliography{refs}

\end{document}